\newcolumntype{\expand}{}
\long\@namedef{NC@rewrite@\string\expand}{\expandafter\NC@find}
  \def\problem@arg{#1}%
  \def\problem@framed{framed}%
  \def\problem@lined{lined}%
  \def\problem@doublelined{doublelined}%
    \def\problem@hline{}%
      \def\problem@hline{\hline\hline}%
      \def\problem@hline{\hline}%
    \def\problem@tablelayout{|>{\bfseries}lX|c}%
    \def\problem@title{\multicolumn{2}{|l|}{%
        \raisebox{-\fboxsep}{\textsc{\Large #2}}%
      }}%
    \def\problem@tablelayout{>{\bfseries}lXc}%
    \def\problem@title{\multicolumn{2}{l}{%
        \raisebox{-\fboxsep}{\textsc{\Large #2}}%
      }}%
\newtheorem{theorem}{Theorem}[section]
\newtheorem{corollary}[theorem]{Corollary}
\newtheorem{proposition}[theorem]{Proposition}
\theoremstyle{remark}
\newcommand\figcaption{\def\@captype{figure}\caption}
\newcommand\tabcaption{\def\@captype{table}\caption}
\newcommand{\bluecomment}[1]{\textcolor{blue}{\textrm{#1}}}
\DeclareMathAlphabet{\mathpzc}{OT1}{pzc}{m}{it}
\begin{document}
\newcounter{my}
\newenvironment{mylabel}
{
\begin{list}{(\roman{my})}{
\setlength{\parsep}{-1mm}
\setlength{\labelwidth}{8mm}
\usecounter{my}}
}{\end{list}}

\newcounter{my2}
\newenvironment{mylabel2}
{
\begin{list}{(\alph{my2})}{
\setlength{\parsep}{-0mm} \setlength{\labelwidth}{8mm}
\setlength{\leftmargin}{3mm}
\usecounter{my2}}
}{\end{list}}

\newcounter{my3}
\newenvironment{mylabel3}
{
\begin{list}{(\alph{my3})}{
\setlength{\parsep}{-1mm}
\setlength{\labelwidth}{8mm}
\setlength{\leftmargin}{10mm}
\usecounter{my3}}
}{\end{list}}

\title{\bf Price of Fairness in Budget Division for\\ Egalitarian Social Welfare}

\date{}
\maketitle
\vspace{-3em}
\begin{center}

\author{Zhongzheng Tang$^{1}$\quad Chenhao Wang$^{2}$\quad Mengqi Zhang$^{3,4}$\\
${}$\\
$^1$ School of Sciences, Beijing University of Posts and\\ Telecommunications, Beijing
100876, China\\
$^2$ Department of Computer Science and Engineering, University of\\ Nebraska-Lincoln, NE, United States\\
$^3$ Academy of Mathematics and Systems Science, Chinese Academy\\ of Sciences, Beijing 100190, China\\
$^4$ School of Mathematical Sciences, University of Chinese Academy\\ of Sciences, Beijing 100049, China\\
\medskip
\{tangzhongzheng,wangch,mqzhang\}@amss.ac.cn}
\end{center}
\vspace{1em}


\begin{abstract}
We study a participatory budgeting problem of aggregating the preferences of agents and dividing a budget over the projects. A budget division solution is a probability distribution over the projects. The main purpose of our study concerns the comparison between the system optimum solution and a fair solution. We are interested in assessing the quality of fair solutions, i.e., in measuring the system efficiency loss under a fair allocation compared to the one that maximizes (egalitarian) social welfare. This indicator is called the price of fairness. We are also interested in the performance of several aggregation rules. Asymptotically tight bounds are provided both for the price of fairness and the efficiency guarantee of aggregation rules.\\

\noindent{\bf Keywords:}~Participatory budgeting; Fairness; Probabilistic voting
\end{abstract}

\section{Introduction}
Suppose there is a list of possible projects that require funding, and some self-interested agents (citizens, parties or players) have their preferences over the projects. Participatory budgeting is a process of aggregating the preferences of agents, and allocating a fixed budget over projects \cite{cabannes2004participatory,goel2019knapsack,brandt2017rolling}. It allows citizens to identify, discuss, and prioritize public spending projects, and gives them the power to make real decisions about how to allocate part of a municipal or public budget, and how money is spent.  These problems consist in sharing resources so that the agents have high satisfaction, and at the same time the budget should be utilized in an efficient way from a central point of view.

We consider participatory budgeting as a probabilistic voting process \cite{gibbard1977manipulation}, which takes as input agents' preferences and returns a probability distribution over projects. That is, a budget division outcome is a vector of non-negative numbers, one for each project, summing up to 1. We focus on an important special case of \emph{dichotomous preferences}: each agent either likes or dislikes each project, and her utility is equal to the fraction of the budget spent on projects she likes. 
Dichotomous preferences are of practical interest because they are easy to elicit. This process is also referred to as \emph{approval voting}, as each voter (agent) specifies a subset of the projects that she ``approves".

The decision-maker is confronted with a system objective of social welfare maximization, and looks for a budget division solution that performs well under the objective. A system optimum is any solution maximizing the social welfare. The utilitarian social welfare is defined as the sum of utilities over all agents, and the egalitarian social welfare is the minimum among the utilities of agents.  On the other hand, it is desirable for a budget division solution to achieve the fairness among agents. Fairness usually concerns comparing the utility gained by one agent to the others' utilities. The concept of fairness is not uniquely defined since it strongly depends on the specific problem setting and also on the agents perception of what a fair solution is. For example, the \emph{Individual Fair Share} requires that each one of the $n$ agents receives a $1/n$ 
share of decision power, so she can ensure an outcome she likes at least $1/n$ 
of the time (or with probability at least $1/n$). 

The system optimum may 
be highly unbalanced, and thus violate the fairness axioms. For instance, it could assign all budget to a single project and this may have severe negative effects in many application scenarios.
Thus, it would be beneficial to reach a certain degree of agents' satisfaction by implementing some criterion of fairness. Clearly, the maximum utility of fair solutions in general deviates from the system optimum and thus incurs a loss of system efficiency. In this paper, we want to analyze such a loss implied by a fair solution from a worst-case point of view. 


We are interested in assessing the quality of fair solutions, i.e., in measuring the system efficiency loss under a fair allocation compared to the one that maximizes the system objective.
This indicator is called the \emph{price of fairness}. Michorzewski \emph{et al.} \cite{michorzewski2020price} study the price of fairness in participatory budgeting under the 
objective of maximizing the utilitarian social welfare. We consider this problem from an egalitarian perspective.


\smallskip\noindent\textbf{Fairness axioms.}
Given a budget equal to $1$ and $n$ agents, there are some well-studied fairness criteria for a budget division solution (or simply, a distribution) \cite{bogomolnaia2005collective,duddy2015fair,fain2016core15,aziz2019fair}. Individual Fair Share (IFS) requires that the utility of each agent is at least $1/n$. Stronger fairness properties require that groups are fair in a sense. \emph{Unanimous Fair Share (UFS)}  gives to every group of agents with the same approval set an influence proportional to its size, that is, each agent in this kind of group will obtain a utility at least the group's related size (its size divided by $n$). \emph{Group Fair Share (GFS)} requires that for any group of agents, the total fraction of the projects approved by the agents of this group is at least its relative size. \emph{Core Fair Share (CFS)} reflects the incentive effect in the voting process. It says that for any group, each agent of the group cannot obtain a higher utility under another mixture with a probability proportional to the group size. 
\emph{Average Fair Share (AFS)} requires that the average utility of any group with a common approved outcome is at least its relative size.
A distribution satisfies \emph{implementability (IMP)} if it can be decomposed 
into individual distributions such that no agent is asked to spend budgets on projects she considers as valueless.
We remark that all other axioms mentioned above are stronger than IFS. Besides, CFS, AFS and IMP implies GFS, which implies UFS \cite{aziz2019fair}. 

\smallskip\noindent\textbf{Voting rules.} The input of voting rules, also called participatory budgeting rules, includes a list of possible projects, the total available budget, and the 
preferences of agents. The output is a partition of budget among the projects - determining how much budget to allocate to each project- which can be seen as a distribution. We say a voting rule satisfies a certain fairness axiom, if any distribution induced by this rule satisfies that. We study the following voting rules that have been proposed for this setting.

\emph{Utilitarian (UTIL)} rule selects a distribution maximizing the utilitarian social welfare, which focus only on efficiency. \emph{Conditional Utilitarian (CUT)} rule is its simple variant, maximizing utilitarian welfare subject to the constraint that the returned distribution {is} implementable. 
\emph{Random Priority (RP)} rule  averages the outcomes of all deterministic priority rules. \emph{Nash Max Product (NASH)} rule balances well the efficiency and fairness, which selects the distribution maximizing the product of agents' utilities. 
\emph{Egalitarian (EGAL)} rule selects a distribution maximizing the minimum utility of agents. Though it is fair to individuals, it does not attempt to be fair to groups, and the egalitarian objective even treats different voters with the same approval set as if they were a single voter. \emph{Point Voting (PV)} rule  assigns each project a fraction of budget proportional to its approval score, which does not satisfy any of our fairness properties. \emph{Uncoordinated Equal Shares (ES)} rule allocates each agent a $1/n$ share of the budget to spend equally on her approval projects. 

\subsection{Our results}
In this paper, we study the participatory budgeting problem under the objective of maximizing the egalitarian social welfare, i.e., the minimum utility among all agents. Two questions are considered {in a worst-case analysis framework}: how well can a distribution perform on the system efficiency, subject to a fairness constraint, and  {how much social welfare can be achieved} by a certain voting rule. 
Suppose there are $n$ agents and $m$ projects, {and the total budget is 1.}

For the former question, we measure the system efficiency loss under a fair distribution by the price of fairness, defined as the ratio of the social welfare of the best fair distribution to
the social welfare of the system optimum, under the worst-case instance. 
We study six fairness axioms concerning the price of fairness, and provided asymptotically tight bounds in Section \ref{sec:fair}. Because every system optimum satisfies IFS, the price of IFS is trivially 1. By constructing an example,
we show that no distribution satisfying UFS (or GFS, IMP, AFS, CFS) can do better than $\frac2n$ for this example, and prove (almost) tight lower bounds. Our results are summarized in Table \ref{tab:1}.

\begin{table}[htbp]
  \centering
  \caption{\normalsize The price of fairness for 6 axioms.} \label{tab:1}
    \begin{tabular}{p{3 cm}<{\centering}|p{1.3cm}<{\centering}|p{1.3cm}<{\centering}|p{1.3cm}<{\centering}|p{1.3cm}<{\centering}|p{1.3cm}<{\centering}|p{1.3cm}<{\centering}}
    \hline
     Fairness axioms &  {IFS}  &{UFS} &  {GFS}  &  {IMP}  &  {AFS}  &  {CFS} \bigstrut\\
    \hline
     Lower bounds & {1}&{$\frac2n$}&{$\frac{2}{n}$}& $\frac2n-\frac{1}{n^2}$  & $\frac2n-\frac{1}{n^2}$  & $\frac2n-\frac{1}{n^2}$   \bigstrut[b]\\
    \hline
    Upper bounds & {1}&{$\frac2n$}&{$\frac{2}{n}$}&{$\frac{2}{n}$}&{$\frac{2}{n}$}&{$\frac{2}{n}$}\bigstrut[b]\\
    \hline
    \end{tabular}
\end{table}

For the latter question, we study seven voting rules in Section \ref{sec:rule}. The \emph{efficiency guarantee} \cite{lackner2019quantitative} of a voting rule is the worst-case ratio between the social welfare induced by the rule 
and the system optimum.  We provide asymptotically tight bounds for their efficiency guarantees, as shown in Table \ref{tab:2}. Obviously EGAL is optimal, but it is not fair enough. CUT, NASH, ES and RP have a guarantee of $\Theta(\frac{1}{n})$, and in particular, NASH is very fair in the sense that it satisfies all axioms mentioned above.

\begin{table}[htbp]
\centering
\caption{\normalsize Efficiency guarantees for 7 voting rules}\label{tab:2}
\begin{tabular}{p{3cm}<{\centering}|p{1.2cm}<{\centering}|p{1.2cm}<{\centering}|p{1.2cm}<{\centering}|p{1.2cm}<{\centering}|p{1.2cm}<{\centering}|p{1.6cm}<{\centering}|p{1.1cm}<{\centering}}
\hline
{Voting rules}&{UTIL}&{CUT}&{NASH}&{EGAL}&{PV}&{ES}&{RP}\bigstrut\\
\hline
{Lower bounds}&{0}&{$\frac{1}{n}$}&   $\frac2n-\frac{1}{n^2}$   & 1 &{0}&{$\frac{1}{n}$}&$\frac{2}{n}$ \bigstrut[b]\\
\hline
{Upper bounds}&{0}&{$\frac{1}{n-3}$}&{$\frac{2}{n}$}&{1}& $O(\frac{1}{mn})$ & $\frac1n+O(\frac{1}{n^k})$ &{$\frac{2}{n}$}\bigstrut[b]\\
\hline
\end{tabular}
\end{table}


\subsection{Related work}
Participatory budgeting (PB), introduced by Cabannes \cite{cabannes2004participatory}, is a process of democratic deliberation and decision-making, in which an authority  allocates a fixed budget to projects, according to the preferences of multiple agents over the projects. Goel \emph{et al.} \cite{goel2019knapsack} and Benade \emph{et al.} \cite{benade2017preference} mainly focus on aggregation rules of PB
for  social welfare maximization. In the setting where the budget is perfectly divisible,  it can be regarded as a probabilistic voting process \cite{gibbard1977manipulation,brandt2017rolling}, where a voting rule takes as input agents' (aka voters') preferences and returns a probability distribution over projects.

An important consideration on PB is what input format to use for preference elicitation - how each voter should express her preferences over the projects. While arbitrary preferences can be complicated and difficult to elicit, dichotomous preferences  are simple and practical \cite{bogomolnaia2004random,bogomolnaia2005collective}, where agents either approve or disapprove a project. For the dichotomous preference, there have been works both for divisible projects (e.g., \cite{bogomolnaia2005collective,aziz2019fair}) and indivisible projects (e.g., \cite{aziz2017proportionally,talmon2019framework}).   This divisible approval-based setup is a popular input format in many settings, for which many fairness notions and voting rules have been proposed \cite{duddy2015fair,aziz2019fair,brandl2019donor}.
The fair share guarantee principles (e.g., IFS, GFS and AFS) are central to the fair division literature \cite{taylor2004fair,bouveret2016fair}.  
IMP is discussed in \cite{brandl2019donor}. Brandl \emph{et.al} \cite{brandl2015incentives} give a formal study of strict participation in probabilistic voting. Recently,  Aziz \emph{et.al} \cite{aziz2019fair} give a detailed discussion of the above fairness notions.

For the voting rules (sometimes referred to as PB algorithms), EGAL rule maximizes the egalitarian social welfare, and  is used as the lead rule in related assignment model with dichotomous preferences in \cite{bogomolnaia2004random}. NASH rule maximizes a classic collective utility function, and has featured prominently in researches \cite{aziz2019fair,caragiannis2019unreasonable11,conitzer2017fair12,fain2016core15}. 
CUT rule was first implicitly used in \cite{duddy2015fair} and studied in more detail by Aziz \emph{et al.} \cite{aziz2019fair}. RP rule is discussed in \cite{bogomolnaia2005collective}.

Our work takes direct inspiration from Michorzewski \emph{et al.} \cite{michorzewski2020price}, who study the price of fairness in the divisible approval-based setting for 
maximizing utilitarian social welfare (while we consider the egalitarian one). Price of fairness 
quantifies the trade-off between fairness properties and maximization of egalitarian social welfare, and is widely studied \cite{bertsimas2011price,aumann2015efficiency,suksompong2019fairly}.


\section{Preliminaries}\label{sec:pre}

An instance is a triple $I=(N,P,A)$, where $N=\{1,\ldots,n\}$ is a set of agents and $P=\{p_1,\ldots,p_m\}$ is a set of projects. Each agent $i\in N$ has an \emph{approval set} $A_i\subseteq P$ over the projects, and $A=\{A_1,\ldots,A_n\}$ is the profile of approval sets. Let $\mathcal{I}_n$ be the set of all instances with $n$ agents.
For each project $p_j\in P$, let $N(p_j)=\{i\in N: p_j\in A_i\}$ be the set of agents who approve $p_j$, and $|N(p_j)|$ be the \emph{approval score} of $p_j$.

A budget division solution is a distribution $\mathbf{x}\in [0,1]^m$ over the projects set $P$, where $x_j$ indicates the budget assigned to project $p_j$, and $\sum_{j=1}^{m}x_j=1$. 
Let $\Delta(P)$ be the set of such distributions. The \emph{utility} of agent $i\in N$ under distribution $\mathbf x$ is {the amount of budget assigned to her approved projects,} 
that is, $u_i(\mathbf{x})=\sum_{p_j\in A_i}x_j.$
The \emph{(egalitarian) social welfare} of $\mathbf x$ is
$$sw(I,\mathbf x)=\min_{i\in N}u_i(\mathbf x).$$
Define the \emph{normalized social welfare} of $\mathbf x$ as
$$\hat{sw}(I,\mathbf x)=\frac{sw(I,\mathbf x)}{sw^*(I)},$$
where {$sw^*(I)$ is the optimal social welfare of instance $I$.}
Clearly, $\hat{sw}(I,\mathbf x)\in[0,1]$. Though the system optimum (that maximizes the minimum utility of agents) is fair in some sense, it is not fair enough. We consider six fairness axioms. 
Given an instance $I=(N,P,A)$, a distribution $\mathbf x$ satisfies
\begin{itemize}
  \item Individual Fair Share (IFS) if $u_i(\mathbf x)\ge 1/n$ for all agent $i\in N$;
  \item {Unanimous Fair Share (UFS)} if for every $S\subseteq N$ such that {$A_i=A_j ~\text{for all}~ i,j \in S$,} we have $u_i(\mathbf x)\ge |S|/n$ 
      for any $i\in S$.
  \item Group Fair Share (GFS) if for every $S\subseteq N$, we have $\sum_{p_j\in \cup_{i\in S}A_i}x_j\ge |S|/n$;
  \item Implementability (IMP) if we can write $\mathbf x=\frac{1}{n}\sum_{i\in N}\mathbf {x_i}$ for some distribution $\mathbf{x_i}$ such that $x_{i,j}>0$ only if $p_j\in A_i$;
  \item Average Fair Share (AFS) if for every $S\subseteq N$ such that $\bigcap_{i\in S}A_i \neq \emptyset$, we have $\frac{1}{|S|}\sum_{i\in S}u_i(\mathbf x)\ge |S|/n$;
  \item Core Fair Share (CFS) if for every $S\subseteq N$, there is no vector $\mathbf z\in [0,1]^m$ with $\sum_{j=1}^{m}z_{j}=|S|/n$ such that $u_i(\mathbf z)>u_i(\mathbf x)$ for all $i\in S$.
\end{itemize}

IFS is the weakest one among the above axioms. Besides, each of CFS, AFS and IMP implies GFS, which implies UFS.

A \emph{voting rule} $f$ is a function that maps an instance $I$ to a distribution $f(I)\in\Delta(P)$. We consider the following voting rules:
\begin{itemize}
\item Utilitarian (UTIL) rule  selects $\mathbf{x}$ maximizing $\sum_{i\in N}u_i(\mathbf{x})$.
\item Conditional Utilitarian (CUT) rule selects the distribution $\frac{1}{n}\sum_{i\in N}\mathbf{x_i}$, where $\mathbf{x_i}$ is the uniform distribution over the projects in $A_i$ with the highest approval score.
\item Nash Max Product (NASH) rule  selects $\mathbf{x}$ maximizing $\prod_{i\in N}u_i(\mathbf{x})$.
\item Egalitarian (EGAL) rule selects $\mathbf{x}$ maximizing $\min_{i\in N}u_i(\mathbf{x})$.
\item Point Voting (PV) rule selects $\mathbf{x}$, where {$x_j=\frac{|N(p_j)|}{\sum_{p\in P}|N(p)|}$} for $ p_j \in P$.
\item Uncoordinated Equal Shares (ES) rule selects distribution $\frac{1}{n}\sum_{i\in N}\mathbf{x_i}$, where $\mathbf{x_i}$ is the uniform distribution over $A_i$, for any $i \in N$.
\item {Random Priority (RP) rule} selects $\frac{1}{n!}\sum_{\sigma \in \Theta(N)}f^\sigma(I)$, where $\Theta(N)$ is the set of all strict orderings of $N$, and {$f^\sigma(I)\in\arg\max_{\mathbf{x}\in \Delta(P)}\succ_{lexico}^{\sigma}$ is a distribution maximizing the utilities of agents lexicographically with ordering $\sigma$.}
\end{itemize}

A voting rule $f$ satisfies a fairness axiom if  distribution $f(I)$ satisfies it for all instances $I$. Table \ref{tab:3} shows the fairness axioms satisfied by the above voting rules.

\begin{table}[ht]
\centering
\caption{\normalsize Fairness axioms satisfied by voting rules}\label{tab:3}
\begin{tabular}{p{2 cm}<{\centering}|p{1cm}<{\centering}|p{1cm}<{\centering}|p{1cm}
<{\centering}|p{1cm}<{\centering}|p{1cm}<{\centering}|p{1cm}<{\centering}|p{1cm}<{\centering}}
\hline
{}&{UTIL}&{CUT}&{NASH}&{EGAL}&{PV}&{ES}&{RP}\bigstrut\\
\hline
{IFS}&{}&{+}&{+}&{+}&{}&{+}&{+}\\
\hline
{UFS}&{}&{+}&{+}&{}&{}&{+}&{+}\\
\hline
{GFS}&{}&{+}&{+}&{}&{}&{+}&{+}\\
\hline
{IMP}&{}&{+}&{+}&{}&{}&{+}&{}\\
\hline
{AFS}&{}&{}&{+}&{}&{}&{}&{}\\
\hline
{CFS}&{}&{}&{+}&{}&{}&{}&{}\\
\hline
\end{tabular}
\end{table}

 As a warm-up, we give some properties on the optimal social welfare.

\begin{proposition}\label{prop:mm}
Let $m^*$ be the minimum possible number such that there is an optimal distribution giving positive budget to exactly $m^*$ projects. If $m^*>1$, the optimal social welfare is at most $\frac{m^*-1}{m^*}$. If $m^*=1$, the optimal social welfare is $1$.
\end{proposition}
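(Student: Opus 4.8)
The plan is to fix an optimal distribution $\mathbf x$ whose support is exactly a set of $m^*$ projects and to exploit the minimality of $m^*$: it will turn out that \emph{every} project in this support must fail to be approved by at least one agent, and the utility constraint coming from such an agent bounds $sw^*(I)$.

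I would first dispose of the case $m^*=1$. Here an optimal $\mathbf x$ places all budget on a single project $p_j$, so each $u_i(\mathbf x)\in\{0,1\}$; assuming, as is implicit throughout (otherwise the normalized social welfare is ill-defined), that $sw^*(I)>0$, we must have $\min_i u_i(\mathbf x)=1$, and since $u_i(\mathbf y)\le\sum_k y_k=1$ for every distribution $\mathbf y$, the value $1$ is the maximum achievable welfare; hence $sw^*(I)=1$.

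For $m^*>1$, write the support as $\{p_1,\dots,p_{m^*}\}$, so that $x_j>0$ for every $j\in[m^*]$ and $\sum_{j=1}^{m^*}x_j=1$, and set $W=sw^*(I)=\min_i u_i(\mathbf x)$. The key claim is: for each $j\in[m^*]$ there exists an agent $i_j$ with $p_j\notin A_{i_j}$. Otherwise some supported project $p_j$ is approved by all agents, in which case the single-project distribution on $p_j$ gives every agent utility $1$, so it is an optimal distribution with support of size $1$, forcing $m^*=1$ and contradicting $m^*>1$. Given such an $i_j$, optimality of $\mathbf x$ gives $W\le u_{i_j}(\mathbf x)$, and since $p_j\notin A_{i_j}$ and $\mathbf x$ is supported on $[m^*]$,
\[
u_{i_j}(\mathbf x)=\sum_{k\in[m^*]:\,p_k\in A_{i_j}}x_k\;\le\;\sum_{k\in[m^*]\setminus\{j\}}x_k\;=\;1-x_j .
\]
Summing the inequality $W\le 1-x_j$ over $j=1,\dots,m^*$ and using $\sum_{j=1}^{m^*}x_j=1$ yields $m^*W\le m^*-1$, i.e. $W\le\frac{m^*-1}{m^*}$.

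The argument is short, so the only real obstacle is conceptual: recognizing that minimality of the support size is exactly what forbids a supported project from being unanimously approved, and that one should extract a separate non-approving agent for each supported project and then aggregate the resulting $m^*$ inequalities, rather than trying to reason about a single distinguished project.
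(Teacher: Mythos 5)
Your proof is correct and rests on exactly the same key observation as the paper's: minimality of $m^*$ forces every supported project to be disapproved by some agent, whose utility then bounds the optimal welfare. The only cosmetic difference is that you sum the inequalities $W\le 1-x_j$ over all $m^*$ supported projects, whereas the paper applies pigeonhole to pick a single project receiving budget at least $\frac{1}{m^*}$ and uses that project's non-approving agent directly; the two conclusions are equivalent.
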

\begin{proof}
Consider an optimal distribution $\mathbf x$ that gives positive budget to $m^*>1$ projects. For each project $p_j\in P$, there must exist an agent (say $a_j\in N$) who does not approve $p_j$; otherwise, a distribution allocating budget 1 to this project is optimal, and thus $m^*=1$, a contradiction. Further, because $\mathbf x$ distributes budget 1 among the $m^*$ projects, there is a project $p_k$ receiving a budget at least $\frac{1}{m^*}$, and agent $a_k$ has a utility at most $1-\frac{1}{m^*}$, establishing the proof. 
\end{proof}

\begin{proposition}
Let {$m'=\min_{S\subseteq P:\cup_{p\in S}N(p)=N}|S|$} be the minimum possible number of projects that cover all agents. 
  Then the optimal social welfare is at least $\frac{1}{m'}$.
\end{proposition}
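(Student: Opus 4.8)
The plan is to exhibit a single explicit distribution whose egalitarian social welfare is already at least $\frac1{m'}$, which immediately forces the optimum to be at least this much. Let $S\subseteq P$ be a set of projects attaining the minimum in the definition of $m'$, so $|S|=m'$ and $\bigcup_{p\in S}N(p)=N$. Define $\mathbf x$ to be the uniform distribution on $S$: set $x_j=\frac1{m'}$ for each $p_j\in S$ and $x_j=0$ otherwise. This is a valid element of $\Delta(P)$ since the $m'$ nonzero coordinates sum to $1$.

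The key step is the utility bound. Fix any agent $i\in N$. Because $S$ covers all agents, there is some project $p_k\in S$ with $i\in N(p_k)$, i.e.\ $p_k\in A_i$. Hence $u_i(\mathbf x)=\sum_{p_j\in A_i}x_j\ge x_k=\frac1{m'}$. Since this holds for every $i$, we get $sw(I,\mathbf x)=\min_{i\in N}u_i(\mathbf x)\ge\frac1{m'}$, and therefore $sw^*(I)\ge sw(I,\mathbf x)\ge\frac1{m'}$, which is exactly the claim.

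There is essentially no obstacle here; the only thing worth a remark is that the quantity $m'$ is well defined precisely when a covering set exists, i.e.\ when every agent approves at least one project (otherwise that agent always has utility $0$ and the statement is vacuous). One can either restrict attention to such non-degenerate instances or observe that when no covering set exists the inequality holds trivially with the convention $\frac1{m'}=0$. Beyond that, the argument is just the one-line construction above, so the write-up will be short.
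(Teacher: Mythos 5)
Your proof is correct and follows exactly the paper's argument: place budget $\frac{1}{m'}$ on each project of a minimum covering set, note that every agent approves at least one of these projects and hence receives utility at least $\frac{1}{m'}$, and conclude that the optimum is at least this value. Your added remark about well-definedness of $m'$ in degenerate instances is a reasonable extra observation but does not change the substance.
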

\begin{proof}
Consider $m'$ projects that cover all agents, i.e., each agent approves at least one of the $m'$ projects. A distribution
that allocates $\frac{1}{m'}$ to each of the $m'$ projects induces a utility of at least $\frac{1}{m'}$ for every agent, implying the optimal social welfare at least $\frac{1}{m'}$.
\end{proof}

\begin{proposition}\label{prop:kk}
For an instance $(N, P, A)$, if the optimal social welfare is $\frac{k}{n}$ for some $k\le n$, then there exists a project $p_j\in P$ such that at least $\lfloor k\rfloor$ agents approve it, i.e., $N(p_j)\ge \lfloor k\rfloor$.
\end{proposition}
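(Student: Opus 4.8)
The plan is to use a simple double-counting (averaging) argument over an optimal distribution. Let $\mathbf{x}^*\in\Delta(P)$ be a distribution attaining the optimal social welfare; such a distribution exists since $\Delta(P)$ is compact and $sw(I,\cdot)=\min_{i\in N}u_i(\cdot)$ is continuous (a minimum of finitely many linear functions). By assumption $sw(I,\mathbf{x}^*)=\frac{k}{n}$, so $u_i(\mathbf{x}^*)\ge \frac{k}{n}$ for every agent $i\in N$. Summing these $n$ inequalities yields $\sum_{i\in N}u_i(\mathbf{x}^*)\ge k$.

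Next I would rewrite the left-hand side by exchanging the order of summation and using the definition of the approval score: $\sum_{i\in N}u_i(\mathbf{x}^*)=\sum_{i\in N}\sum_{p_j\in A_i}x_j^*=\sum_{p_j\in P}x_j^*\,|N(p_j)|$. Since $x_j^*\ge 0$ for all $j$ and $\sum_{p_j\in P}x_j^*=1$, the quantity $\sum_{p_j\in P}x_j^*\,|N(p_j)|$ is a convex combination of the numbers $|N(p_j)|$, hence is at most $\max_{p_j\in P}|N(p_j)|$. Combining with the previous step gives $\max_{p_j\in P}|N(p_j)|\ge k$. Because each approval score $|N(p_j)|$ is a nonnegative integer, the maximum is in fact at least $\lceil k\rceil\ge\lfloor k\rfloor$, so some project $p_j$ satisfies $|N(p_j)|\ge\lfloor k\rfloor$, which proves the claim.

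There is essentially no obstacle here: the argument is a one-line pigeonhole/averaging estimate once the order of summation is swapped. The only points worth stating carefully are that the optimum is actually attained by a feasible distribution, and that a $\mathbf{x}^*$-weighted average of the approval scores cannot exceed the largest score; both are immediate. (One could also remark that the proof actually gives the slightly stronger conclusion $|N(p_j)|\ge\lceil k\rceil$, but the floor version is all that is needed in the sequel.)
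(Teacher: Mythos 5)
Your proof is correct and is essentially the paper's argument run forwards instead of by contradiction: both rest on the identity $\sum_{i\in N}u_i(\mathbf{x}^*)=\sum_{p_j\in P}x_j^*\,|N(p_j)|$ together with the observation that this $\mathbf{x}^*$-weighted average of approval scores cannot exceed the largest one. The paper assumes every score is at most $\lfloor k\rfloor-1$ and derives an agent with utility below $\frac{k}{n}$, whereas you conclude directly that $\max_j|N(p_j)|\ge k$ (and correctly note the slightly stronger bound $\lceil k\rceil$); the content is the same.
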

\begin{proof}
Suppose for contradiction that for every $p_j\in P$, $N(p_j)\le \lfloor k\rfloor-1$. Let $\mathbf x$ be {an optimal distribution}. Each project $p_j$ can provide totally at most $(\lfloor k\rfloor-1)x_j$ utility for the $n$ agents. As $\sum_{p_j\in P}x_j=1$, the total utility that the $m$ projects can provide for the $n$ agents is at most $\lfloor k\rfloor-1$. Hence, there exists at least one agent whose utility is at most $\frac{\lfloor k\rfloor-1}{n}<\frac{k}{n}$, a contradiction with the optimal social welfare.
\end{proof}

\section{Guarantees for fairness axioms}\label{sec:fair}
{Given an instance $I$, the \emph{price of fairness (POF)} of IFS with respect to $I$ is defined as the ratio of the social welfare of the best IFS  distribution to the optimal social welfare, that is,
$$\text{POF}_{IFS}(I)=\sup_{\mathbf x\in\Delta_{IFS}}\frac{sw(I,\mathbf x)}{sw^*(I)}=\sup_{\mathbf x\in\Delta_{IFS}}\hat{sw}(I,\mathbf x),$$
where $\Delta_{IFS}$ is the set of distributions satisfying IFS.

The POF of IFS  is the infimum over all instances, that is,
$$\text{POF}_{IFS}=\inf_{I\in\mathcal I_n}\text{POF}_{IFS}(I).$$
The POFs of other fairness axioms are similarly defined.}

By the definition of IFS (that every agent receives a utility at least $1/n$), 
it is easy to see that every instance admits an IFS distribution, and thus an optimal distribution must satisfy IFS. We immediately have the following theorem.

\begin{theorem}
For any instance $I$, there exists an IFS distribution $\mathbf x$ such that $\hat{sw}(I,\mathbf x)= 1$. {That is, $\text{POF}_{IFS}=1$.} 
\end{theorem}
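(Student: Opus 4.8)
The plan is to exhibit, for an arbitrary instance $I$, an IFS distribution that attains the optimal social welfare, which immediately forces $\text{POF}_{IFS}(I)=1$ for every $I$ and hence $\text{POF}_{IFS}=1$. The obvious candidate is an optimal distribution $\mathbf x^*$ itself, so the only thing to verify is that $\mathbf x^*$ satisfies IFS, i.e.\ that $u_i(\mathbf x^*)\ge 1/n$ for every agent $i\in N$.

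First I would observe that the uniform-over-approval distributions give a lower bound on the optimum: if some agent $i$ had $A_i=\emptyset$, the problem is degenerate (that agent has utility $0$ under every distribution, $sw^*(I)=0$, and the statement about ratios needs the convention $0/0$ or should be read as vacuous), so I would assume every $A_i\neq\emptyset$. Then consider the distribution $\mathbf x$ obtained, for instance, as in the ES rule: $\mathbf x=\frac1n\sum_{i\in N}\mathbf{x_i}$ where $\mathbf{x_i}$ is uniform over $A_i$. Every agent $i$ receives at least the $\frac1n$ contribution coming from her own $\mathbf{x_i}$ (which places all of its mass on $A_i$), so $u_i(\mathbf x)\ge 1/n$ for all $i$, and therefore $sw(I,\mathbf x)\ge 1/n$, giving $sw^*(I)\ge 1/n$.

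Now let $\mathbf x^*$ be any optimal distribution, so $sw(I,\mathbf x^*)=sw^*(I)\ge 1/n$. By definition of the egalitarian social welfare, $u_i(\mathbf x^*)=\sum_{p_j\in A_i}x^*_j\ge \min_{k\in N}u_k(\mathbf x^*)=sw^*(I)\ge 1/n$ for every $i\in N$. Hence $\mathbf x^*\in\Delta_{IFS}$, and $\hat{sw}(I,\mathbf x^*)=sw(I,\mathbf x^*)/sw^*(I)=1$. Since $\hat{sw}(I,\mathbf x)\le 1$ for all feasible $\mathbf x$ by definition of $sw^*$, the supremum in $\text{POF}_{IFS}(I)$ equals $1$, and taking the infimum over $I\in\mathcal I_n$ yields $\text{POF}_{IFS}=1$.

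There is essentially no obstacle here: the only subtlety is the edge case of an agent with an empty approval set, which I would dispose of explicitly (or exclude by a standing assumption that every agent approves at least one project, matching the second proposition in the excerpt which implicitly assumes the agents can be covered). Everything else is a one-line consequence of the definitions, so the write-up can be kept to a few sentences.
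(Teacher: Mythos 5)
Your proposal is correct and follows essentially the same route as the paper: the paper likewise observes that every instance admits an IFS distribution, so $sw^*(I)\ge 1/n$ and therefore any optimal distribution itself satisfies IFS and attains $\hat{sw}(I,\mathbf x)=1$. You merely make the witness (the ES-style distribution) and the empty-approval-set edge case explicit, which the paper leaves implicit.
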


Also, we can give a tight lower bound for the normalized social welfare of IFS distributions. Recall that GFS implies IFS.
\begin{theorem}
  For any instance $I$ and any IFS (or GFS) distribution $\mathbf x$,
  we have $\hat{sw}(I,\mathbf x)\ge \frac 1 n$.
Further, there exists an instance $I$ and a GFS distribution $\mathbf x$ such that $\hat{sw}(I,\mathbf x)= \frac 1 n$.
\end{theorem}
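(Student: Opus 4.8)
The plan is to handle the two halves of the statement separately, the first being essentially definitional and the second requiring one explicit instance.

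\textbf{Lower bound.} First I would observe that since every agent's utility is at most $1$, we always have $sw^*(I)\le 1$. If $\mathbf x$ satisfies IFS, then by definition $u_i(\mathbf x)\ge 1/n$ for every $i\in N$, so $sw(I,\mathbf x)=\min_{i\in N}u_i(\mathbf x)\ge 1/n$, and hence $\hat{sw}(I,\mathbf x)=sw(I,\mathbf x)/sw^*(I)\ge (1/n)/1=1/n$. Since GFS implies IFS (as recalled just before the statement), the same bound holds verbatim for any GFS distribution $\mathbf x$.

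\textbf{Tightness.} Next I would construct the worst-case instance. Let $N=\{1,\dots,n\}$ and $P=\{p_0,p_1,\dots,p_n\}$, and set $A_i=\{p_0,p_i\}$ for each $i\in N$. Because $p_0$ belongs to every approval set, the distribution putting all budget on $p_0$ gives each agent utility $1$, so $sw^*(I)=1$ (this is clearly also the largest possible value). Now define $\mathbf x$ by $x_0=0$ and $x_i=1/n$ for $i=1,\dots,n$; this is a valid distribution. Then $u_i(\mathbf x)=x_0+x_i=1/n$ for every $i$, so $sw(I,\mathbf x)=1/n$ and $\hat{sw}(I,\mathbf x)=1/n$, matching the bound above. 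It remains to verify that $\mathbf x$ satisfies GFS: for any $S\subseteq N$ one has $\bigcup_{i\in S}A_i=\{p_0\}\cup\{p_i:i\in S\}$, hence $\sum_{p_j\in\bigcup_{i\in S}A_i}x_j=x_0+\sum_{i\in S}x_i=|S|/n$, which meets the GFS requirement with equality.

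The whole argument is short; the only place calling for a little thought is the choice of instance, where the common project $p_0$ is what forces $sw^*(I)=1$, while the pairwise-distinct private projects $p_1,\dots,p_n$ are exactly what keeps GFS from imposing any positive lower bound on $x_0$. I do not expect any genuine obstacle.
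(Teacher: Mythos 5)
Your proof is correct and follows essentially the same route as the paper: the lower bound is the same definitional observation, and your tight instance (one universally approved project forcing $sw^*(I)=1$, plus $n$ private projects each receiving budget $1/n$) is a streamlined version of the paper's construction, which adds some extra overlapping projects that play no role in the argument.
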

\begin{proof}
The first claim is straightforward from the definition. For the second claim, we consider an instance $I$ with $n$ agents and $m=2n+1$ projects. For any $i\in N\setminus\{n\}$, the approval set of agent $i$ is $A_i=\{p_{2i-1},p_{2i},p_{2i+1},p_{2n+1}\}$, and $A_n=\{p_{2n-1},p_{2n},p_{1},p_{2n+1}\}$.
That is, all agents have a common approval project $p_{2n+1}$, and each agent $i$ has an approval project $p_{2i}$, which is not approved by other agents. The optimal social welfare is $sw^*(I)=1$, attained by placing all budget to common project $p_{2n+1}$. Consider a distribution $\mathbf x$ where $x_{2i}=\frac 1 n$ for each $i\in N$, and $x_j=0$ for {any} 
other project $p_j$. The utility of every agent is $\frac 1 n$, and it is easy to check that $\mathbf x$ satisfies GFS, {because for any group  $S\subseteq N$ we have $\sum_{p_j\in \cup_{i\in S}A_i}x_j= |S|/n$.} Then the social welfare induced by distribution $\mathbf x$ is $sw(I,\mathbf x)=\frac 1 n$, which implies $\hat{sw}(I,x)=\frac{sw(I,x)}{sw^*(I)}=\frac 1 n$. This completes the proof.
\end{proof}

In the following we give a {universal upper bound $\frac 2n$} on the POFs of all other fairness axioms.

\begin{theorem}\label{thm:ins}
  There exists an instance $I$ such that for every distribution $\mathbf x$ satisfying UFS (or GFS, IMP, AFS, CFS), we have $\hat{sw}(I,\mathbf x)\le \frac 2 n$. That is, the POF of  UFS (or GFS, IMP, AFS, CFS)  is at most $\frac2n$.  
\end{theorem}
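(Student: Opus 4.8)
The plan is to reduce to UFS and then exhibit one very small instance. Since each of GFS, IMP, AFS, CFS implies GFS$\,\Rightarrow\,$UFS (recorded in Section~\ref{sec:pre}), every distribution satisfying any one of these axioms is in particular a UFS distribution. Hence it suffices to produce a single instance $I$ on which \emph{every} UFS distribution $\mathbf x$ satisfies $\hat{sw}(I,\mathbf x)\le\frac2n$; the same bound for the four stronger axioms then follows immediately, and taking the infimum over instances gives the claim about the price of fairness.

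For the instance I would take $n$ agents and just two projects $P=\{p_1,p_2\}$, with $A_i=\{p_1\}$ for every $i\in\{1,\dots,n-1\}$ and $A_n=\{p_2\}$. The idea is that agents $1,\dots,n-1$ form a unanimous block of size $n-1$ that is completely indifferent to $p_2$, while agent $n$ cares only about $p_2$; UFS forces the block to be served almost in full, which starves agent $n$, whereas the optimum can split the budget evenly.

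The verification then has two short steps. First, $sw^*(I)=\tfrac12$: writing a distribution as $(x_1,x_2)$ with $x_1+x_2=1$, we have $u_i(\mathbf x)=x_1$ for $i<n$ and $u_n(\mathbf x)=x_2=1-x_1$, so $sw(I,\mathbf x)=\min\{x_1,1-x_1\}\le\tfrac12$, with equality at $x_1=\tfrac12$. Second, if $\mathbf x$ is UFS, then applying the UFS condition to the group $S=\{1,\dots,n-1\}$, which has a common approval set, gives $x_1=u_i(\mathbf x)\ge\frac{|S|}{n}=\frac{n-1}{n}$ for $i\in S$, hence $x_2\le\frac1n$ and therefore $sw(I,\mathbf x)\le u_n(\mathbf x)=x_2\le\frac1n$. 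Dividing, $\hat{sw}(I,\mathbf x)\le\frac{1/n}{1/2}=\frac2n$, as desired.

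I do not expect a genuine obstacle here; the only points needing care are (i) getting the direction of the implication chain right so that the reduction to UFS is legitimate, and (ii) confirming that no other UFS constraint is relevant — groups that mix agent $n$ with block members do not have a constant approval set, so UFS imposes nothing on them. It is also worth noting in passing that $I$ is not vacuous: the distribution $\bigl(\tfrac{n-1}{n},\tfrac1n\bigr)$ satisfies UFS (indeed it is the only one, and one checks it also satisfies GFS, IMP, AFS and CFS), so the value $\frac2n$ is actually attained and the bound is the exact price of fairness of this instance. If one wanted the same construction to double as the witness for the lower-bound entries in Table~\ref{tab:1}, the instance would have to be refined, but for the present statement this two-project example already does the job.
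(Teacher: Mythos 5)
Your proof is correct and uses essentially the same construction and argument as the paper: the two-project instance with $N(p_1)=\{1,\dots,n-1\}$ and $N(p_2)=\{n\}$, the observation that $sw^*(I)=\tfrac12$, and the UFS constraint on the unanimous block forcing $u_n(\mathbf x)\le\tfrac1n$. The paper additionally applies UFS to the singleton $\{n\}$ to pin down the unique UFS distribution, but that is only needed for the attainment remark you also make, not for the upper bound itself.
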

\begin{proof}
  Consider an instance $I$ with $n$ agents and $2$ projects. Agents $1,2,\ldots,n-1$ approve project $p_1$, and agent $n$ approves $p_2$. That is, $N(p_1)=\{1,2,\ldots,n-1\}$ and $N(p_2)=\{n\}$. The optimal social welfare is $sw^*(I)=1/2$, attained by giving each project half of the budget. For any distribution $\mathbf x$ satisfying UFS for instance $I$, let the utility of each agent {in  $N(p_1)$} be $x_1$, and the utility of agent $n$ be $x_2$. Applying {UFS to coalition $N(p_1)$ and $N(p_2)$, respectively, we have $x_1\ge \frac{n-1}{n}$ and $x_2\ge \frac{1}{n}$.}
  Since $x_1+x_2=1$, it must be $x_1=\frac{n-1}{n}, x_2=\frac{1}{n}$. Then $sw(I,\mathbf x)=\min\{x_1,x_2\}=\frac{1}{n}$, and $\hat{sw}(I,x)=\frac{sw(I,x)}{sw^*(I)}=\frac{2}{n}$. This completes the proof for {UFS}. Since each of GFS, IMP, AFS and CFS implies UFS, this conclusion also holds for GFS, IMP, AFS and CFS.
\end{proof}

\section{Guarantees for voting rules}\label{sec:rule}
{In this section, we consider seven voting rules (UTIL, CUT, NASH, EGAL, PV, ES and RP), and analyze their performance on the system objective in the worst case. Further,
these analysis turn back to provide POF results for fairness axioms, as the voting rules satisfy some certain fairness axioms (see Table \ref{tab:3}).}
Define the \emph{efficiency guarantee} (or simply, \emph{guarantee}) of voting rule $f$ {as the worst-case normalized social welfare:}
$$k_{eff}(f)=\min_{I\in \mathcal{I}_n}\hat{sw}(I,f(I)).$$


\begin{theorem}\label{thm:lbe}
The efficiency guarantee of UTIL is 0, and that of EGAL is 1. The efficiency guarantees of {CUT, NASH, ES, and RP} are {all in $[\frac1n,\frac2n]$.}
\end{theorem}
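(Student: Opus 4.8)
The plan is to establish the four separate claims in turn, since they are essentially independent. For UTIL, I would exhibit a family of instances on which the utilitarian optimum places all budget on a single high-score project, starving some agent; concretely, with $n$ agents and $2$ projects where $n-1$ agents approve $p_1$ and one agent approves $p_2$, UTIL puts $x_1=1$ giving egalitarian welfare $0$, while $sw^*=1/2$, so $\hat{sw}=0$; taking this (or a slight perturbation to make UTIL's choice unique) shows $k_{eff}(\msc{util})=0$. For EGAL, the claim is immediate from the definition: EGAL selects a distribution maximizing $\min_i u_i(\mathbf x)$, which is exactly $sw^*(I)$, so $\hat{sw}(I,\msc{egal}(I))=1$ for every instance.

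For the bracket $[\frac1n,\frac2n]$ I would split into the lower bound $k_{eff}(f)\ge\frac1n$ and the upper bound $k_{eff}(f)\le\frac2n$, handled uniformly where possible. The lower bound $\ge\frac1n$ should follow for CUT, NASH, ES, RP from the fact (Table~\ref{tab:3}) that each of these rules satisfies IFS: an IFS distribution gives every agent utility $\ge 1/n$, hence $sw(I,f(I))\ge 1/n$, while $sw^*(I)\le 1$, so $\hat{sw}\ge 1/n$. For the upper bound $\le\frac2n$, NASH, ES and RP satisfy UFS (again Table~\ref{tab:3}), so on the instance $I$ from Theorem~\ref{thm:ins} (the $n$-agent, $2$-project instance with $N(p_1)=\{1,\dots,n-1\}$, $N(p_2)=\{n\}$) any UFS distribution is forced to $x_1=\frac{n-1}{n}$, $x_2=\frac1n$, giving $\hat{sw}(I,f(I))=\frac2n$; thus $k_{eff}(f)\le\frac2n$ for those three. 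CUT does not satisfy AFS/CFS but does satisfy UFS per Table~\ref{tab:3}, so the same instance works for CUT as well — alternatively one checks directly that on that instance CUT's definition (uniform over each agent's highest-score approved project) yields $x_1=\frac{n-1}{n}$, $x_2=\frac1n$.

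The main obstacle I anticipate is making sure the upper-bound instance is actually valid for \emph{each} of the four rules simultaneously, rather than invoking a fairness axiom that a given rule might not satisfy; in particular one must either cite the correct entry of Table~\ref{tab:3} for CUT or compute CUT directly on the witness instance. A secondary subtlety is the UTIL lower bound: one should ensure the utilitarian maximizer on the witness instance is the degenerate all-on-$p_1$ distribution (break ties by a tiny perturbation of scores, e.g. duplicating agents so $p_1$'s score strictly exceeds $p_2$'s, which it already does for $n\ge 3$), so that $\hat{sw}=0$ genuinely holds. Since this theorem only asserts membership in $[\frac1n,\frac2n]$ and not the exact values (those are deferred to later results sharpening the bounds), this level of argument suffices.
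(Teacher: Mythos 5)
Your proposal is correct and follows essentially the same route as the paper: the same two-project witness instance (from Theorem~\ref{thm:ins}) for the UTIL and upper-bound claims, the IFS property for the lower bound $\frac1n$, and the UFS property (or direct computation) to force $\hat{sw}=\frac2n$ on that instance for CUT, NASH, ES, and RP. Your extra care about tie-breaking for UTIL and about verifying the witness for each rule individually is a minor refinement of what the paper does, not a different argument.
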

\begin{proof}
The efficiency guarantee of EGAL is trivial. 
Consider the instance constructed in the proof of Theorem \ref{thm:ins}, {where $N(p_1)=\{1,\ldots,n-1\}$ and $N(p_2)=\{n\}$}. The optimal social welfare is $\frac 12$, attained by allocating $x_1=x_2=\frac12$. Rule UTIL returns $x_1=1$ and $x_2=0$, inducing a utility 0 for agent $n$. So the guarantee of UTIL is 0. Rules CUT, NASH, PV, ES all return $x_1=1-\frac1n$ and $x_2=\frac1n$, inducing a utility $\frac1n$ for agent $n$. So the guarantee of these four rules is at most $\frac{1/n}{1/2}=\frac2n$. {(Indeed, this claim simply follows from Theorem \ref{thm:ins}, since all the four rules satisfy UFS.)}

On the other hand, for any instance $I$ and the distribution $\mathbf{x}$ returned by CUT (resp., NASH, ES, RP), we have $u_i(I)\ge \frac{1}{n}$ for any $i\in N$, since the rule satisfies IFS. Then $sw(I,\mathbf x)\ge \frac{1}{n}$, which implies $\hat{sw}(I,\mathbf x)\ge \frac{1}{n}$. Therefore, the efficiency guarantees of CUT (resp., NASH, ES, RP) is at least $\frac{1}{n}$.
\end{proof}

\begin{theorem}
The efficiency guarantee of ES is no better than {$\frac1n+O(\frac{1}{n^k})$, for all $k\in \mathbb{N}^{+}$.} 
\end{theorem}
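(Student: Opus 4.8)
The plan is to exhibit, for each fixed $k\in\mathbb{N}^{+}$, a single instance $I_k\in\mathcal I_n$ on which ES already performs as badly as $\frac1n+O(\frac1{n^k})$; since $k_{eff}(\text{ES})$ is an infimum over $\mathcal I_n$, this immediately gives $k_{eff}(\text{ES})\le\frac1n+O(\frac1{n^k})$ for every $k$. The instance should contain one universally-approved project (to pin the optimum at the largest possible value $1$, ruling out any cheaper optimum via Proposition \ref{prop:mm}) while making each agent's ballot under ES so ``thin'' that the bonus she collects from the common project is negligible compared with her guaranteed self-contribution of $\frac1n$.

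Concretely, I would take $n$ agents and $m=1+n^{k+1}$ projects: a common project $p_0$, together with, for each agent $i$, a block $P_i$ of $d:=n^k$ private projects, the blocks $P_1,\dots,P_n$ being pairwise disjoint and disjoint from $\{p_0\}$. Put $A_i=\{p_0\}\cup P_i$, so $|A_i|=d+1$. First I would note that $sw^*(I_k)=1$, since the distribution placing all budget on $p_0$ gives every agent utility $1$, and no utility can exceed $1$.

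Then I would compute the ES distribution $\mathbf x$ explicitly. As ES sets $\mathbf x=\frac1n\sum_{i\in N}\mathbf{x_i}$ with $\mathbf{x_i}$ uniform on $A_i$, project $p_0$ receives $x_0=\frac1n\cdot n\cdot\frac1{d+1}=\frac1{d+1}$, and each private project receives $\frac1{n(d+1)}$ because only its owner funds it. Hence, by symmetry, every agent has the same utility
$$u_i(\mathbf x)=\frac1{d+1}+d\cdot\frac1{n(d+1)}=\frac1n+\frac{n-1}{n(d+1)},$$
so $sw(I_k,\mathbf x)=\frac1n+\frac{n-1}{n(n^k+1)}\le\frac1n+\frac1{n^k}$, and dividing by $sw^*(I_k)=1$ yields $\hat{sw}(I_k,\mathbf x)\le\frac1n+\frac1{n^k}$. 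Since $k$ was arbitrary, the efficiency guarantee of ES is at most $\frac1n+O(\frac1{n^k})$ for all $k$.

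There is essentially no technical obstacle here; the only point requiring thought is calibrating the trade-off correctly — keeping $p_0$ approved by everyone is what forces $sw^*=1$, while inflating each approval set to size $\Theta(n^k)$ is what makes the share $x_0$ that each agent recoups from $p_0$ shrink like $n^{-k}$, so that the ES utility converges to $\frac1n$ at the prescribed rate. It is also worth checking explicitly that the private blocks receive no cross-agent reinforcement (each $x_p$ for $p\in P_i$ has a single contributing agent), which is exactly why the blocks are taken pairwise disjoint.
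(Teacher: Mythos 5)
Your construction is exactly the instance the paper uses: one universally approved project plus $n^k$ private projects per agent (so $m=n^{k+1}+1$), with the same ES computation giving each agent utility $\frac{n^{k-1}+1}{n^k+1}=\frac1n+O(\frac1{n^k})$ against an optimum of $1$. The proof is correct and essentially identical to the paper's.
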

\begin{proof}
Consider an instance with {$n$ agents and} $m=n^{k+1}+1$ projects. Each agent approves $n^k+1$ projects. The intersection of every two approval sets is project $p_m$, implying that $p_m$ is approved by all agents, and the approval score of every other project is 1. It is easy to see that (by Proposition \ref{prop:mm}), the optimal social welfare is 1, attained by allocating all budget to $p_m$. {The outcome of ES is $x_m=\frac{1}{n^k+1}$, and $x_j=\frac{1}{n}\cdot \frac{1}{n^k+1}$ for any $p_j\neq p_m$. Thus,} ES gives each agent a utility of
$$\frac{1}{n(n^k+1)}\cdot n^k+\frac{1}{n^k+1}=\frac{n^{k-1}+1}{n^k+1}=\frac1n+O(\frac{1}{n^k}),$$
which completes the proof.
\end{proof}

\begin{theorem}
The efficiency guarantee of PV is  $O(\frac{1}{mn})$. 
\end{theorem}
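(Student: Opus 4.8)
The plan is to exhibit a single family of instances on which PV performs this badly, since the efficiency guarantee is a worst-case (minimum over instances) quantity and we only need an upper bound of $O(\frac{1}{mn})$. The idea is to make one project extremely popular (approved by almost everyone) so that its approval score dominates the denominator in the PV formula, while making sure there is one agent whose only approved project has approval score $1$; PV will then starve that agent. Concretely, I would take $n$ agents and $m$ projects. Let agents $1,\dots,n-1$ all approve a common project $p_1$, and additionally let each of them approve many "dummy" projects so as to inflate the total approval score $\sum_{p\in P}|N(p)|$; agent $n$ approves only a single private project $p_m$ with $N(p_m)=\{n\}$. By Proposition~\ref{prop:mm} (or directly), the optimal social welfare is $\frac12$, attained by splitting the budget between $p_1$ and $p_m$.

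Next I would compute the PV outcome. We have $x_m = \frac{|N(p_m)|}{\sum_{p\in P}|N(p)|} = \frac{1}{\sum_{p\in P}|N(p)|}$. To force the bound I need the total approval score to be $\Theta(mn)$: if each of the $n-1$ popular agents approves a $\Theta(m)$-sized set of otherwise-private projects (plus $p_1$), then $\sum_{p\in P}|N(p)| = \Theta(mn)$. Hence $x_m = O(\frac{1}{mn})$, and since agent $n$'s utility under PV equals exactly $x_m$, we get $sw(I,\text{PV}(I)) = x_m = O(\frac{1}{mn})$. Therefore $\hat{sw}(I,\text{PV}(I)) = \frac{x_m}{1/2} = O(\frac{1}{mn})$, giving $k_{eff}(\text{PV}) = O(\frac1n \cdot \frac1m)$ as claimed. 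I should double-check the book-keeping so that the construction genuinely uses $m$ projects with a consistent count (e.g., the private projects of the popular agents can be made disjoint, each approved by exactly one agent, contributing $1$ each to the total score, and there are $\Theta(m)$ of them per agent for a total of $\Theta(mn)$).

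The only real subtlety — and the step I would be most careful about — is making the dependence on $m$ honest: the bound $O(\frac{1}{mn})$ presumes $m$ can be taken as large as one likes relative to $n$, so I must arrange the instance so that for each target value of $m$ (at least some growing sequence), there is an instance with exactly that many projects on which PV achieves normalized welfare $O(\frac{1}{mn})$ with a universal constant. This is just a matter of letting the popular agents share their pool of private projects or partition it; either way the total approval score is $\Theta(mn)$ and the argument goes through. Everything else is a one-line computation from the PV definition and the fact that agent $n$'s utility is precisely the mass on her lone project.
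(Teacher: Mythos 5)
Your approach is essentially the paper's: the paper's instance is exactly your ``shared pool'' variant, where agents $1,\dots,n-1$ all approve the same $m-1$ projects and agent $n$ approves only $p_m$, giving total approval score $(m-1)(n-1)+1=\Theta(mn)$, optimal welfare $\tfrac12$, and PV welfare $\tfrac{1}{(m-1)(n-1)+1}=O(\tfrac{1}{mn})$. One correction, though: your closing claim that ``either way'' (sharing or partitioning the pool of private projects) yields total approval score $\Theta(mn)$ is false. If the popular agents' extra projects are pairwise disjoint and the instance is to contain only $m$ projects, each agent can approve only $O(m/n)$ of them, each such project has approval score $1$, and the total approval score is $\Theta(m+n)$; that variant only gives a bound of $O(\tfrac{1}{m+n})$. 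So you must use the shared version --- which is precisely the paper's construction --- and drop the partitioned alternative.
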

\begin{proof}
Consider an instance $I$ with {$n$ agents and} $m$ projects.
Each agent in $N\setminus\{n\}$ approves $p_1,\ldots,p_{m-1}$, and agent $n$ approves $p_m$. That is, the first $m-1$ projects are approved by the first $n-1$ agents, and the last project is approved by the remaining  agent. The optimal social welfare is $sw^*(I)=\frac{1}{2}$, attained by a distribution with $x_1=x_m=\frac12$. However, PV allocates each project in {$P\setminus \{p_m\}$} 
a budget of $\frac{n-1}{(m-1)(n-1)+1}$,
and project $p_m$ a budget of $\frac{1}{(m-1)(n-1)+1}$. Then the social welfare induced by PV is $\frac{1}{(m-1)(n-1)+1}$, which implies the guarantee of PV is at most $\frac{2}{(m-1)(n-1)+1}=O(\frac{1}{mn})$.
\end{proof}

\begin{theorem}
The efficiency guarantee of CUT is no better than $\frac{1}{n-3}$. 
\end{theorem}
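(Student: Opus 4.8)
The plan is to exhibit a single instance $I$ on which $\hat{sw}(I,\mathrm{CUT}(I))<\tfrac1{n-3}$; since $k_{eff}(\mathrm{CUT})=\min_{I\in\mathcal{I}_n}\hat{sw}(I,\mathrm{CUT}(I))$, that settles the claim. The guiding constraint is that CUT satisfies IFS, so $sw(I,\mathrm{CUT}(I))\ge\tfrac1n$ for every $I$; hence to beat the easy bound $\tfrac2n$ we must design an instance in which (i) one agent is held down to utility exactly $\tfrac1n$ by CUT, while (ii) $sw^*(I)$ is close to $1$ — concretely $sw^*(I)\ge\tfrac{n-3}{n-1}$ will suffice. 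Take $n\ge 4$ agents, with agent $n$ the ``victim''. There are $n$ projects: $n-1$ ``cycle'' projects $p_1,\dots,p_{n-1}$ and one ``anchor'' project $p_n$. Set $A_n=\{p_1,\dots,p_{n-1}\}$ (the victim likes every cycle project but not the anchor). View $1,\dots,n-1$ as the vertices of a cycle and associate $p_i$ with the edge $\{i,i+1\}$ (indices mod $n-1$); declare $p_i$ to be disapproved by exactly its two endpoints among the non-victims and approved by the victim, so $|N(p_i)|=n-2$. Let every non-victim approve the anchor, $N(p_n)=\{1,\dots,n-1\}$, so $|N(p_n)|=n-1$. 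Thus non-victim $j$ approves $p_n$ together with the $n-3$ cycle projects whose edge is not incident to $j$.

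First I would compute $\mathrm{CUT}(I)$. For the victim, all of $p_1,\dots,p_{n-1}$ tie for the highest approval score $n-2$ within $A_n$, so she spreads her $\tfrac1n$ mass uniformly over them, contributing $\tfrac1{n(n-1)}$ to each. For each non-victim $j$, the anchor $p_n$ is the \emph{unique} highest-score project in $A_j$ (score $n-1>n-2$), so $j$ puts her whole $\tfrac1n$ on $p_n$. Summing, $\mathrm{CUT}(I)$ has $x_{p_n}=\tfrac{n-1}{n}$ and $x_{p_i}=\tfrac1{n(n-1)}$ for $i\le n-1$ (which sum to $1$), the victim's utility is $\sum_{i=1}^{n-1}x_{p_i}=\tfrac1n$, and every non-victim gets at least $x_{p_n}=\tfrac{n-1}{n}$; hence $sw(I,\mathrm{CUT}(I))=\tfrac1n$. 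For the optimum, test the distribution $\mathbf y$ with $y_{p_n}=0$ and $y_{p_i}=\tfrac1{n-1}$ for $i\le n-1$: the victim gets $1$, and each non-victim, approving $n-3$ of the cycle projects, gets $\tfrac{n-3}{n-1}$, so $sw^*(I)\ge\tfrac{n-3}{n-1}$. Therefore $\hat{sw}(I,\mathrm{CUT}(I))=\dfrac{1/n}{sw^*(I)}\le\dfrac{n-1}{n(n-3)}<\dfrac1{n-3}$, which is what we want (in fact this gives the marginally stronger bound $\tfrac{n-1}{n(n-3)}$). No case split on the parity of $n$ is needed, since a cycle on $n-1\ge 3$ vertices always exists; for $n=4$ it is just the triangle on $\{1,2,3\}$.

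The main difficulty is not the arithmetic but arranging (i) and (ii) to coexist, since they are in genuine tension: for $sw^*$ to approach $1$ essentially every agent must approve almost every project receiving budget, but then those projects carry a near-maximal approval score, which is exactly what CUT is attracted to — so CUT would fund them too and no agent would be starved. The escape is the anchor $p_n$: it must have \emph{strictly} the largest approval score so that it soaks up \emph{all} of the non-victims' CUT mass, yet the victim must be excluded from it; simultaneously the victim's projects must all sit strictly below $p_n$ in score (forcing each to be disapproved by at least two non-victims) while still covering every non-victim in all but a bounded number of them, so that the egalitarian optimum can scatter the whole budget over them and still serve everyone. A $2$-regular graph (one cycle) on the $n-1$ non-victims is the most economical gadget meeting both demands, and the ``$-3$'' in the bound is precisely $(n-1)-2$: the number of cycle projects a non-victim still approves after deleting the two incident to her.
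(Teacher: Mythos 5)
Your proof is correct and follows essentially the same strategy as the paper's: an ``anchor'' project with strictly highest approval score that is disapproved only by agent $n$ (so CUT dumps all non-victims' mass there and starves the victim down to $1/n$), together with victim-approved projects each disapproved by exactly two non-victims (so the optimum can spread the budget over them and give everyone at least $\tfrac{n-3}{n-1}$). The only difference is cosmetic: the paper uses all $\binom{n-1}{2}$ pairs of non-victims as disapproval sets, whereas you use the $n-1$ edges of a cycle, a leaner gadget yielding the identical bound $\tfrac{n-1}{n(n-3)}<\tfrac{1}{n-3}$.
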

\begin{proof}
Consider an instance with $m={n-1\choose2}+1$. Each of the first ${n-1\choose2}$ projects corresponds to a unique pair of the first $n-1$ agents who disapprove it, and all other agents approve it; the last project $p_m$ is approved by the first $n-1$ agents. {That is, each agent in $N\setminus\{n\}$ disapproves $n-2$ projects in $P\setminus\{p_m\}$, and approves all other projects; agent $n$ approves all $m-1$ projects in $P\setminus\{p_m\}$, and disapproves project $p_m$. Fig. 1 
shows a 5-agent example. }

\begin{figure}[htpb]
\begin{center}
\includegraphics[scale=0.25]{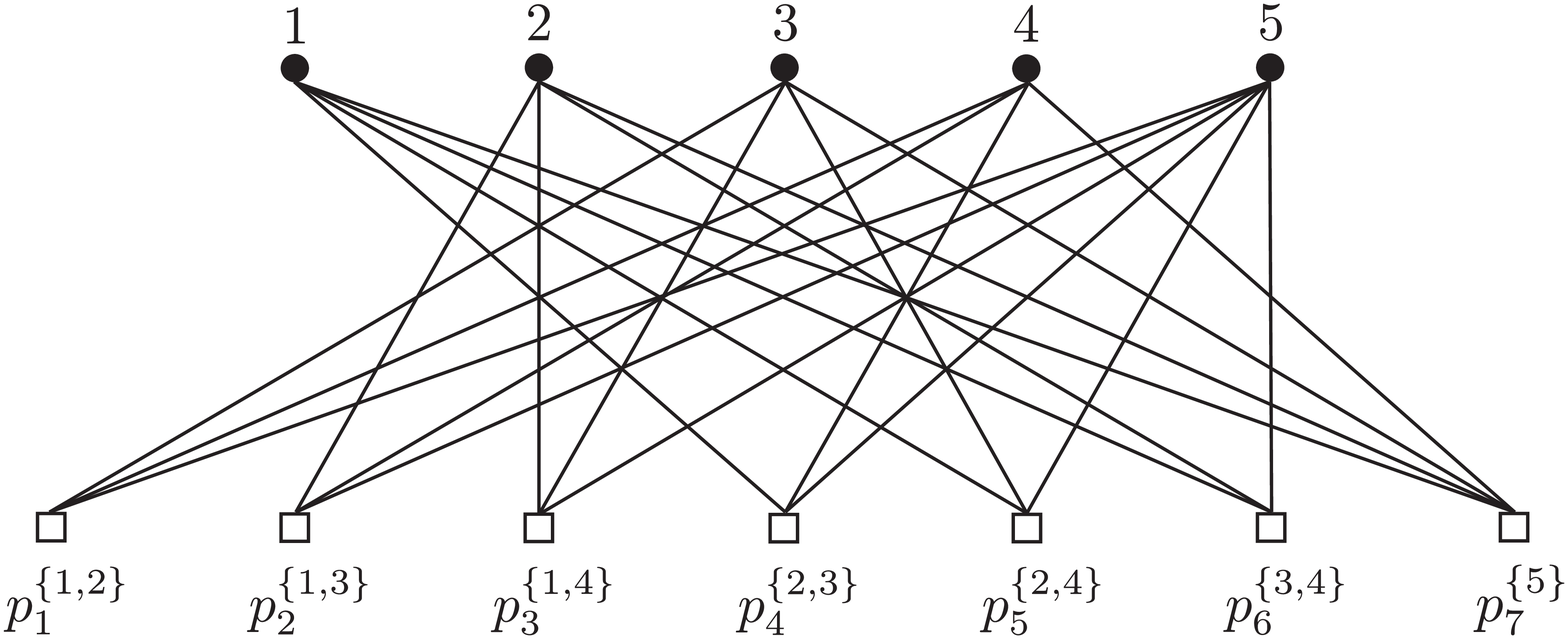}\label{fig:cut}
\caption{{An example with $n=5$ agents and $m=7$ projects, where $p_j^S$ indicates that project $p_j$ is disapproved by every agent in set $S\subseteq N$. Each of the first 6 projects corresponds to a pair of agents. } }
\end{center}
\end{figure}
Then we have $|N(p_m)|=n-1$, and $|N(p_j)|=n-2$ for each $j\leq m-1$. 
The optimal social welfare is at least $1-\frac{n-2}{m} > \frac{n-3}{n-1}$, achieved by allocating uniform budget to each project. However, {CUT rule allocates each project in $P\setminus \{p_m\}$ a budget of $\frac{1}{n(m-1)}$, and project $p_m$ a budget of $\frac{n-1}{n}$. Then the social welfare induced by CUT is $1/n$, (i.e., the utility of agent $n$)}
which implies the efficiency guarantee of CUT is at most $\frac{n-1}{n(n-3)}<\frac{1}{n-3}$.
\end{proof}


\begin{theorem}
The efficiency guarantee of NASH is in $[\frac2n-\frac{1}{n^2},\frac2n]$.
\end{theorem}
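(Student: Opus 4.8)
The upper bound $k_{eff}(\mathrm{NASH})\le \frac2n$ is already contained in Theorem~\ref{thm:lbe}: on the two–project instance where $n-1$ agents approve $p_1$ and one agent approves $p_2$, NASH returns $x_1=1-\frac1n$, $x_2=\frac1n$, so $\hat{sw}=\frac{1/n}{1/2}=\frac2n$. Hence the real content is the lower bound $k_{eff}(\mathrm{NASH})\ge \frac2n-\frac1{n^2}=\frac{2n-1}{n^2}$; equivalently, for every instance $I$ with $sw^*(I)>0$ (the case $sw^*(I)=0$ being vacuous, e.g.\ when some agent approves nothing), the NASH distribution $\mathbf x$ satisfies $\min_i u_i(\mathbf x)\ge \frac{2n-1}{n^2}\,sw^*(I)$.

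The first step is to record the optimality conditions of NASH. If every $A_i\neq\emptyset$, some distribution gives all agents positive utility, so the NASH distribution $\mathbf x$ has $u_i:=u_i(\mathbf x)>0$ for all $i$ and maximizes the concave objective $\sum_i\log u_i(\mathbf x)$ over the simplex. The KKT conditions give a multiplier $\lambda$ with $\sum_{i\in N(p_j)}\frac1{u_i}\le\lambda$ for every project $p_j$, with equality whenever $x_j>0$. Multiplying by $x_j$, summing over $j$, and using $\sum_{p_j\in A_i}x_j=u_i$ gives $\sum_j x_j\sum_{i\in N(p_j)}\frac1{u_i}=\sum_i\frac1{u_i}u_i=n$, so $\lambda=n$. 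Two consequences follow. (a) Applying the per-project inequality to any $p\in A_a$, where $a$ is a minimum-utility agent, gives $\frac1{u_a}\le\sum_{i\in N(p)}\frac1{u_i}\le n$, i.e.\ $u_a\ge\frac1n$ (this is just IFS). (b) Letting $\mathbf x^*$ be an egalitarian-optimal distribution and summing the per-project inequality with weights $x_j^*$ gives $\sum_i\frac{u_i(\mathbf x^*)}{u_i}\le n$; since $u_i(\mathbf x^*)\ge sw^*:=sw^*(I)$ for all $i$, this yields $\sum_i\frac1{u_i}\le\frac{n}{sw^*}$, and combining with $u_i\le1$ (so $\frac1{u_i}\ge1$ for the $n-1$ agents $i\neq a$) gives $\frac1{u_a}\le\frac{n}{sw^*}-(n-1)$, i.e.\ $u_a\ge\frac{sw^*}{n-(n-1)sw^*}$.

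It remains to combine (a) and (b) by a short case analysis on $sw^*$. If $sw^*\le\frac{n}{2n-1}$, then by (a) $u_a\ge\frac1n=\frac{2n-1}{n^2}\cdot\frac{n}{2n-1}\ge\frac{2n-1}{n^2}\,sw^*$. If $sw^*\ge\frac{n}{2n-1}$, then using (b) it suffices to check $\frac1{n-(n-1)sw^*}\ge\frac{2n-1}{n^2}$, which rearranges to $(2n-1)(n-1)\,sw^*\ge n(n-1)$, i.e.\ $sw^*\ge\frac{n}{2n-1}$ — true by assumption. In both cases $u_a\ge\frac{2n-1}{n^2}\,sw^*$, hence $\hat{sw}(I,\mathbf x)\ge\frac2n-\frac1{n^2}$, completing the lower bound and the theorem.

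The main obstacle is step (b): fixing $\lambda=n$ and, more importantly, noticing that the single bound $u_a\ge\frac{sw^*}{n-(n-1)sw^*}$ degrades to only $\frac{sw^*}{n}$ when $sw^*$ is small, so it must be paired with the IFS bound $u_a\ge\frac1n$; the threshold $\frac{n}{2n-1}$ is precisely where the two bounds cross (and where the worst case lies). A minor point to dispatch cleanly is the degenerate instance where some agent approves no project, in which case $sw^*=0$ and the claim is vacuous.
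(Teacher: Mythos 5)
Your proof is correct and rests on the same two ingredients as the paper's: the first-order optimality condition of the NASH program, which yields $\sum_{i\in N} u_i(\mathbf x^*)/u_i^{NS}\le n$ (you derive it via KKT multipliers, the paper via the directional derivative along $\mathbf u'-\mathbf u^{NS}$), combined with the IFS bound $u_i^{NS}\ge \frac1n$ and the trivial bound $u_i^{NS}\le 1$. The only difference is presentational: you run a direct two-case analysis around the threshold $sw^*=\frac{n}{2n-1}$, whereas the paper packages the same estimates into a proof by contradiction.
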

\begin{proof}
Let $I$ be an arbitrary instance {with $n$ agents and $m$ projects}, and $f_{NS}(I)$ be {the} 
distribution returned by NASH rule. Since NASH satisfies IFS, we have $sw(I,f_{NS}(I))\ge \frac1n$. By Theorem \ref{thm:ins} and the fact that NASH satisfies UFS, the efficiency guarantee is at most $\frac2n$. 
If the social welfare induced by NASH rule is $sw(I,f_{NS}(I))\ge \frac2n$, the proof is done. So we only need to  consider the case $sw(I,f_{NS}(I))\in[\frac1n, \frac2n)$. Suppose for contradiction that
\begin{equation}\label{eq:nash}
\frac{sw(I,f_{NS}(I))}{sw^*(I)}<\frac2n-\frac{1}{n^2}.
\end{equation}

Let {$\mathbf u^*$ and $\mathbf u^{NS}$} be the utility profiles induced by an optimal distribution and the solution output by NASH, respectively. Let $\bar i\in N$ be the agent with the minimum utility in the NASH solution, i.e., $u_{\bar i}^{NS}=sw(I,f_{NS}(I))$. Then we have
\begin{align}
\sum_{i\in N}\frac{u^*_i}{u_i^{NS}} & = \frac{u^*_{\bar i}}{u_{\bar i}^{NS}} +\sum_{i\in N:i\neq \bar i}\frac{u^*_i}{u_i^{NS}}\nonumber\\
&\ge\frac{sw^*(I)}{sw(I,f_{NS}(I))}+\sum_{i\in N:i\neq \bar i}\frac{u^*_i}{u_i^{NS}}\label{eq:22}\\
&> \frac{1}{2/n-1/n^2}+\frac{sw^*(I)}{1}(n-1)\nonumber\\
&>\frac{1}{2/n-1/n^2}+\frac{1/n}{2/n-1/n^2}(n-1)\label{eq:33}\\
&=n,\label{eq:con}
\end{align}
{where 
(\ref{eq:22}) comes from (\ref{eq:nash}), and 
(\ref{eq:33}) comes from (\ref{eq:nash}) and 
$sw(I,f_{NS}(I))\ge \frac1n$.}

The outcome of NASH rule is an optimal solution of
the following convex optimization problem:
\begin{align*}
\max ~~~&h(\mathbf u) = \sum_{i\in N} \log (u_i)\\
\text{s.t.}~~~&\sum_{p_j\in A_i} x_j = u_i, ~~~ \text{for all~}i \in N\\
&\sum_{j=1}^m x_j = 1,\\
&x_j\geq 0, ~~~\text{for~} j = 1, \dots, m
\end{align*}
Let $\mathcal D$ be the feasible domain of this problem. Since all constraints are linear,  $\mathcal D$ is a convex set. {Let $\mathcal D_{\mathbf u}=\{\mathbf u|\exists~\mathbf x~\text{s.t.}~(\mathbf u, \mathbf x)\in \mathcal D\}$ be a restriction on $\mathbf u$. Then for any $0\leq \alpha\leq  1$ and any utility profile $\mathbf u' \in \mathcal D_{\mathbf u}$, we have $\mathbf u^{NS} + \alpha( \mathbf u' - \mathbf u^{NS})\in\mathcal D_{\mathbf u}$.}
Then, we can derive
\begin{align*}
&\lim_{\alpha\rightarrow0^{+}}\frac{h(\mathbf u^{NS} + \alpha( \mathbf u' - \mathbf u^{NS}))-h(\mathbf u^{NS})}{\alpha}\leq 0\\
\Longrightarrow~ &\nabla h(\mathbf u^{NS})^{\mathrm T}(\mathbf u' - \mathbf u^{NS})\leq 0\\
\Longrightarrow~ &\sum_{i\in N} \frac{u'_i}{u^{NS}_i}\leq n,
\end{align*}
which gives a contradiction to Equation (\ref{eq:con}).
\end{proof}

Because NASH rule satisfies the properties IMP, AFS and CFS, combining with Theorem \ref{thm:ins}, we have the following corollary.

\begin{corollary}
The POFs of IMP, AFS and CFS are all in $[\frac2n-\frac{1}{n^2},\frac2n]$.
\end{corollary}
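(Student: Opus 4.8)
The plan is to derive both bounds directly from two facts already established in the excerpt: the efficiency guarantee of the NASH rule, and the bad instance of Theorem~\ref{thm:ins}. For the upper bound $\mathrm{POF}\le\frac2n$, I would reuse verbatim the instance $I$ built in the proof of Theorem~\ref{thm:ins}, on which every distribution satisfying UFS has $\hat{sw}(I,\mathbf x)\le\frac2n$. Since each of IMP, AFS and CFS implies UFS, this same $I$ witnesses $\sup_{\mathbf x\in\Delta_{IMP}}\hat{sw}(I,\mathbf x)\le\frac2n$ (and likewise for $\Delta_{AFS}$, $\Delta_{CFS}$); taking the infimum over $\mathcal I_n$ gives $\mathrm{POF}_{IMP}\le\frac2n$, and the same for AFS and CFS.

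For the lower bound $\mathrm{POF}\ge\frac2n-\frac1{n^2}$, the idea is that for an \emph{arbitrary} instance $I$ we must exhibit one admissible distribution that is efficient enough, and the NASH outcome $f_{NS}(I)$ does this uniformly. By Table~\ref{tab:3}, $f_{NS}(I)$ satisfies IMP, AFS and CFS, so it lies in $\Delta_{IMP}\cap\Delta_{AFS}\cap\Delta_{CFS}$; by the preceding theorem on the efficiency guarantee of NASH, $\hat{sw}(I,f_{NS}(I))\ge\frac2n-\frac1{n^2}$ for every $I$. Hence $\sup_{\mathbf x\in\Delta_{IMP}}\hat{sw}(I,\mathbf x)\ge\frac2n-\frac1{n^2}$ for all $I$, and taking the infimum over $I\in\mathcal I_n$ yields $\mathrm{POF}_{IMP}\ge\frac2n-\frac1{n^2}$; the argument for AFS and CFS is identical.

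There is essentially no hard step: the corollary is a bookkeeping consequence of the NASH efficiency bound, Theorem~\ref{thm:ins}, and the fairness table. The only point that needs a moment of care is matching the quantifier structure in the definition of POF --- the lower bound requires a witness distribution for \emph{every} instance (supplied uniformly by the NASH rule, which is why it is crucial that NASH is simultaneously IMP, AFS and CFS), whereas the upper bound requires a \emph{single} instance on which \emph{all} admissible distributions are inefficient (supplied by Theorem~\ref{thm:ins} via the implication down to UFS). One could in principle avoid routing through NASH by constructing an explicit IMP/AFS/CFS distribution for a general instance and bounding its welfare, but since the NASH rule already provides such a distribution and has been shown to be fair in all three senses, no additional work is needed.
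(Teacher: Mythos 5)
Your proposal is correct and follows exactly the paper's route: the upper bound comes from the instance of Theorem~\ref{thm:ins} via the implication of each axiom down to UFS, and the lower bound comes from the NASH rule's efficiency guarantee of $\frac2n-\frac1{n^2}$ together with the fact that NASH satisfies IMP, AFS and CFS. Your explicit discussion of the quantifier structure is a useful clarification of what the paper leaves implicit, but the argument is the same.
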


\begin{theorem}
{The efficiency guarantee of RP is $\frac2n$.}
\end{theorem}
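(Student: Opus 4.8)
The plan is to prove the two matching bounds $k_{eff}(\mathrm{RP})\le \frac2n$ and $k_{eff}(\mathrm{RP})\ge \frac2n$. For the upper bound I would reuse the two-project instance $I$ from the proof of Theorem~\ref{thm:ins} (with $N(p_1)=\{1,\dots,n-1\}$, $N(p_2)=\{n\}$, and $sw^*(I)=\tfrac12$). Since RP satisfies UFS by Table~\ref{tab:3}, the distribution returned by RP on $I$ is a UFS distribution, so Theorem~\ref{thm:ins} immediately gives $\hat{sw}(I,\mathrm{RP}(I))\le\frac2n$, and hence $k_{eff}(\mathrm{RP})\le\frac2n$. (Concretely, on this instance $f^\sigma(I)$ puts all budget on $p_1$ whenever $\sigma_1\neq n$ and all budget on $p_2$ when $\sigma_1=n$, so $\mathrm{RP}(I)=(\tfrac{n-1}{n},\tfrac1n)$ and $sw(I,\mathrm{RP}(I))=\tfrac1n=\tfrac2n\cdot sw^*(I)$, which also suffices.)

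For the lower bound I would fix an arbitrary instance $I$ with $n\ge 2$ (and $sw^*(I)>0$, i.e.\ all approval sets nonempty), and split on the value of $sw^*(I)$. If $sw^*(I)\le\frac12$, then because RP satisfies IFS (Table~\ref{tab:3}) we have $sw(I,\mathrm{RP}(I))\ge\frac1n\ge\frac2n\, sw^*(I)$, so $\hat{sw}(I,\mathrm{RP}(I))\ge\frac2n$ and this case is done with no further work.

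The substantive case is $sw^*(I)>\frac12$. Here I would first observe that in an optimal distribution $\mathbf x^*$ every agent has utility $>\frac12$, so for any two agents $i,j$ the budget on $A_i$ plus the budget on $A_j$ exceeds $1$, which forces positive budget on $A_i\cap A_j$; hence all approval sets are pairwise intersecting. Now fix any agent $i$ and any ordering $\sigma\in\Theta(N)$ in which $i$ occupies position $1$ or $2$. I claim $u_i(f^\sigma(I))=1$. If $i=\sigma_1$, the lexicographic rule maximizes $u_i$ first and $A_i\neq\emptyset$ gives $u_i=1$. If $i=\sigma_2$, the rule first forces all budget onto $A_{\sigma_1}$ (to attain $u_{\sigma_1}=1$) and then maximizes $u_i$; since $A_{\sigma_1}\cap A_i\neq\emptyset$, all of the budget can be placed inside $A_{\sigma_1}\cap A_i$, so again $u_i=1$ (and this value is forced for every element of the lexicographic argmax, independent of tie-breaking). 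Exactly $2(n-1)!$ orderings put $i$ in position $1$ or $2$, so averaging yields $u_i(\mathrm{RP}(I))=\frac1{n!}\sum_\sigma u_i(f^\sigma(I))\ge\frac{2(n-1)!}{n!}=\frac2n$. As this holds for every $i$, $sw(I,\mathrm{RP}(I))\ge\frac2n\ge\frac2n\, sw^*(I)$, so $\hat{sw}(I,\mathrm{RP}(I))\ge\frac2n$ in this case too. Combining with the upper bound gives $k_{eff}(\mathrm{RP})=\frac2n$.

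The step I expect to be the crux is recognizing \emph{why} the factor $2$ appears: the naive IFS bound $\frac1n$ (coming from the $(n-1)!$ orderings in which $i$ is first) is insufficient precisely when $sw^*(I)>\frac12$, and it is exactly in that regime that the pairwise-intersection property lets an agent sitting in the \emph{second} slot of an ordering also absorb the entire budget, upgrading $(n-1)!$ to $2(n-1)!$. A minor technical point to handle carefully is that $\max_{\mathbf x}u_{\sigma_1}(\mathbf x)=1$, which requires $A_{\sigma_1}\neq\emptyset$; this is guaranteed in the substantive case since $sw^*(I)>0$ (and the $sw^*(I)\le\frac12$ case does not need it).
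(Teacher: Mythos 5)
Your proposal is correct and follows essentially the same route as the paper: the upper bound via the two-project UFS instance of Theorem~\ref{thm:ins}, and the lower bound via the case split on $sw^*(I)\le\frac12$ (where IFS suffices) versus $sw^*(I)>\frac12$ (where pairwise-intersecting approval sets let an agent in position $1$ or $2$ of $\sigma$ attain utility $1$, giving $u_i(\mathrm{RP}(I))\ge\frac{2(n-1)!}{n!}=\frac2n$). Your write-up is, if anything, slightly more careful about the tie-breaking in the lexicographic argmax and about why $A_{\sigma_1}\neq\emptyset$ is needed.
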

\begin{proof}
Let $I$ be an arbitrary instance, and $f_{RP}(I)$ be {the}
distribution returned by RP rule.
Since RP satisfies IFS, the social welfare of $f_{RP}(I)$ is at least $\frac1n$. If $sw^*(I)\le \frac12$, the normalized social welfare is $\hat{sw}(I,f_{RP}(I))\ge \frac2n$. So it suffices to consider the case $sw^*(I)>\frac12$.

When $sw^*(I)>\frac12$, if there are two agents $i,j\in N$ such that $A_i\cap A_j=\emptyset$, then no distribution can give both agents a utility larger than $\frac12$, a contradiction. So for any two agents, the intersection of their approval sets is non-empty. For each agent $i\in N$, under RP rule, the probability of ranking the first among the $n!$ permutations is $\frac1n$, where {she} 
receives a utility 1. Suppose $i=\sigma(2)$ and $j=\sigma(1)$ for a permutation $\sigma\in\Theta(N)$. Since RP maximizes the utility of agents lexicographically {with respect to $\sigma$}, it must allocate all budget to their intersection $A_i\cap A_j$, and the utilities of agents $i$ and $j$ both are  1. Note that the probability of ranking the second among the $n!$ permutations for agent $i$ is $\frac1n$. The utility of agent $i$ under RP rule is at least
$$\textbf{Pr}\{i=\sigma(1)\}\cdot 1+\textbf{Pr}\{i=\sigma(2)\}\cdot 1=\frac2n,$$
and the normalized social welfare is also at least $\frac2n$.
Combining with the upper bound in Theorem \ref{thm:lbe}, the efficiency guarantee of RP is $\frac2n$.
\end{proof}

{We remark that it is still open whether RP rule can be implemented in polynomial time.} Because RP rule satisfies GFS, combining with Theorem \ref{thm:ins}, we have the following corollary.

\begin{corollary}
The POF of GFS is $\frac2n$.
\end{corollary}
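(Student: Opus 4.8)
The plan is to derive this as an immediate consequence of the two facts already established: the universal upper bound from Theorem~\ref{thm:ins} and the efficiency guarantee of the RP rule, together with the fact (recorded in Table~\ref{tab:3}) that RP always produces a GFS distribution.

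First I would unwind the definition: $\text{POF}_{GFS}=\inf_{I\in\mathcal I_n}\sup_{\mathbf x\in\Delta_{GFS}}\hat{sw}(I,\mathbf x)$. For the upper bound, Theorem~\ref{thm:ins} already exhibits an instance $I$ on which \emph{every} distribution satisfying UFS---and hence every distribution satisfying GFS, since GFS implies UFS---has $\hat{sw}(I,\mathbf x)\le \frac2n$. Taking the infimum over instances gives $\text{POF}_{GFS}\le\frac2n$.

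For the matching lower bound, I would fix an arbitrary instance $I$ and consider $\mathbf x=f_{RP}(I)$. By Table~\ref{tab:3}, RP satisfies GFS, so $\mathbf x\in\Delta_{GFS}$; and by the preceding theorem the efficiency guarantee of RP equals $\frac2n$, so $\hat{sw}(I,f_{RP}(I))\ge\frac2n$. Therefore $\sup_{\mathbf x\in\Delta_{GFS}}\hat{sw}(I,\mathbf x)\ge\frac2n$ for every $I$, whence $\text{POF}_{GFS}\ge\frac2n$. Combining the two inequalities yields $\text{POF}_{GFS}=\frac2n$.

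There is no real obstacle here---the corollary is purely a repackaging of earlier results. If anything, the only point requiring a moment's care is checking that GFS (not just IFS) is the right hypothesis class: one must invoke ``GFS implies UFS'' so that the hard instance of Theorem~\ref{thm:ins} applies to GFS distributions, and one must cite the GFS-membership of RP from Table~\ref{tab:3} rather than re-proving it. Both are stated in the excerpt and may be used directly.
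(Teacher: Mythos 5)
Your proposal is correct and follows exactly the paper's route: the upper bound comes from the hard instance of Theorem~\ref{thm:ins} (via GFS $\Rightarrow$ UFS), and the lower bound comes from the fact that RP satisfies GFS and has efficiency guarantee $\frac{2}{n}$. The paper states this as a one-line combination of those two results, so your write-up is simply a more explicit version of the same argument.
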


\section{Conclusion}

We quantify the trade-off between the fairness criteria and the maximization of egalitarian social welfare in a participatory budgeting problem with dichotomous preferences. Compared with the work of Michorzewski \emph{et al.} \cite{michorzewski2020price}, which considers this approval-based setting under the utilitarian social welfare, we additionally study a fairness axiom Unanimous Fair Share (UFS) and a voting rule Random Priority (RP). We present (asymptotically) tight bounds on the price of fairness for six fairness axioms and the efficiency guarantees for seven voting rules. In particular, both NASH and RP rules are guaranteed to provide a roughly $\frac2n$ fraction of the optimum egalitarian social welfare. The NASH solution can be computed by solving a convex program, while RP is unknown to be computed efficiently.

Both the work of \cite{michorzewski2020price} and this paper assume that all agents have dichotomous preferences, and an immediate future research direction would be to study the effect of fairness constraint when agents are allowed to have a more general preference. Another avenue for future research is considering the fairness in participatory budgeting from the projects' perspective. For example, the projects (e.g., facility managers and location owners) have their own thoughts, and may have a payoff from the budget division, for which a good solution should balance the system efficiency and the satisfaction of the agents and projects.  
So it would be interesting to study the trade-off between system efficiency and this new class of fairness concepts.

\section*{Acknowledge}
The authors thank Xiaodong Hu and Xujin Chen for their helpful discussions, and
anonymous referees for their valuable feedback. %

\bibliography{reference}

\end{document}